\newtheorem{theorem}{Theorem}
\theoremstyle{definition}
\newtheorem{defn}{Definition}
\newtheorem{example}{Example}
\newtheorem{problem}{Problem}
\theoremstyle{remark}
\newtheorem{rem}{Remark}
\theoremstyle{assumption}
\newtheorem{assump}{Assumption}
\theoremstyle{fact}
\theoremstyle{claim}
\numberwithin{equation}{section}
\newcommand{\norm}[1]{\left\lVert{#1}\right\rVert}
\newcommand{\abs}[1]{\left\lvert{#1}\right\rvert}
\newcommand{\Let}{\coloneqq}
\newcommand{\pmat}[1]{\begin{pmatrix}#1\end{pmatrix}}
\renewcommand{\geq}{\geqslant}
\renewcommand{\leq}{\leqslant}
\newcommand{\R}{\mathbb{R}}
\newcommand{\N}{\mathbb{N}}
\renewcommand{\P}{\mathcal{P}}
\newcommand{\Dm}{\delta}
\newcommand{\DM}{\Delta}
\title[Stability under restricted switching]{Robust matrix commutator conditions for stability of switched linear systems under\\restricted switching}
\author{Atreyee Kundu}
\address{Department of Electrical Engineering,\\Indian Institute of Science Bangalore,\\Bengaluru - 560012, India,\\ E-mail: atreyeek@iisc.ac.in,\\Webpage: www.ee.iisc.ac.in/faculty/atreyee}
\author{Debasish Chatterjee}
\address{Systems \& Control Engineering,\\Indian Institute of Technology Bombay,\\Mumbai - 400076, India,\\E-mail: dchatter@iitb.ac.in,\\Webpage: {www.sc.iitb.ac.in/\~{}chatterjee}}
\keywords{}
\date{\today}
\begin{document}

	\begin{abstract}
        This article treats global uniform exponential stability (GUES) of discrete-time switched linear systems under restricted switching. Given admissible minimum and maximum dwell times, we provide sufficient conditions on the subsystems under which they admit a set of switching signals that obeys the given restrictions on dwell times \emph{and} preserves stability of the resulting switched system. Our analysis relies on combinatorial arguments applied to matrix commutators and avoids the employment of Lyapunov-like functions. The proposed set of stabilizing switching signals is characterized in terms of duration of activation of Schur stable subsystems and non-consecutive activation of distinct unstable subsystems.
    \end{abstract}

    \maketitle
\section{Introduction}
\label{s:intro}
    Hybrid systems find wide applications in modern day Cyber-Physical Systems (CPS). In this article we deal with an abstraction of hybrid systems, where we focus on the discrete dynamics and abstract away the continuous dynamics as switching. Such an abstraction is called a discrete-time \emph{switched system}, and contains two ingredients --- a family of systems and a switching signal. The \emph{switching signal} selects an \emph{active subsystem} at every instant of time, i.e., the system from the family that is currently being followed \cite[\S 1.1.2]{Liberzon}. A vast body of hybrid systems literature is devoted to stability of switched systems.

    Given a family of systems, much attention has been devoted to finding estimates of minimum dwell times on stable subsystems and maximum dwell times on unstable subsystems under which stability of a switched system is preserved; see e.g., \cite{ghi,Liberzon'12,Zhai'02}. On the one hand, these stability conditions are only sufficient and do not imply instability under a set of dwell times different from the estimated ones. On the other hand, the minimum and maximum dwell times on subsystems are often governed by physical constraints of a system. For instance, actuator saturations may prevent switching frequencies beyond a certain limit, or in order to switch from one component to another, a system may undergo certain operations of non-negligible durations leading to a minimum dwell time requirement on each subsystem. In addition, systems whose components need regular maintenance or replacements, e.g., aircraft carriers, MEMS systems, etc. and systems that are dependent on diurnal or seasonal changes, e.g., components of an electricity grid have inherent restrictions on admissible maximum dwell times \cite{Heydari'17, abc}. It is, therefore, of interest to study properties of a switched system under pre-specified restrictions on dwell times. Stability and optimal control of switched systems under restricted dwell times were dealt with earlier in the literature, see e.g., \cite{Hernandez-Vargas'12,Heydari'17,Jungers'13,abc,def} and the references therein.

    In this article we consider the setting where all subsystems are linear, and study stability of a switched system under pre-specified restrictions on dwell times. Given admissible minimum and maximum dwell times, our objective is to characterize sets of subsystems such that they admit switching signals that obey the given restrictions \emph{and} preserve stability of the resulting switched system. We allow both Schur stable and unstable (not Schur stable) subsystems, and achieve our task in two steps:
     \begin{itemize}[label = \(\circ\), leftmargin = *]
        \item first, we fix a set of switching signals that obey the given restrictions, and
        \item second, we identify sufficient conditions on subsystems under which the above set of switching signals is stabilizing.
     \end{itemize}
     Our characterization of stabilizing switching signals is based on minimum dwell times on Schur stable subsystems and non-consecutive activation of distinct unstable subsystems. Sufficient conditions on the subsystem matrices are derived by employing commutation relations between certain products of these matrices.

	A switched linear system is known to be stable under arbitrary switching if the subsystem matrices are stable and commute pairwise \cite{Narendra'94} or are sufficiently ``close'' to a set of matrices whose elements commute pairwise \cite{Agrachev'12}. Recently in \cite{mno} one of the authors extended the conditions of \cite{Agrachev'12} to the setting of stability under all switching signals obeying a given minimum dwell time. The overarching assumption in the above body of results is that all subsystems are Schur stable. In this article we deal with matrix commutator based characterization of stability of switched systems under pre-specified restrictions on dwell times when not all subsystems are Schur stable. Towards this end, we follow the combinatorial analysis technique proposed in \cite{Agrachev'12}, and present two sets of sufficient conditions:
	\begin{itemize}[label = \(\circ\), leftmargin = *]
        		\item The first set of conditions caters to the setting where certain products of Schur stable and unstable subsystem matrices commute. We rely on the rate of decay of the Schur stable subsystems to guarantee stability. These conditions, however, lack robustness in the sense that if the entries of the subsystem matrices are perturbed by a margin such that the matrix products of our interest cease to commute, then our conditions are no longer useful to guarantee stability. However, stability, being a robust property, may be preserved under small perturbations in the elements of the subsystem matrices. This fact motivates our second set of stability conditions.
        		\item The second set of conditions caters to sets of subsystems for which the commutators under consideration do not necessarily vanish, but are small quantities in the induced Euclidean norm. We rely on the rate of decay of the Schur stable subsystems, upper bounds on the norms of the commutators of certain products of the subsystem matrices, and a set of scalars relating to the individual matrices and the given minimum and maximum dwell times. These conditions ensure robust stability in the sense that if perturbing the elements of the subsystem matrices does not take them ``too far'' from a set of matrices for which certain matrix products commute, then stability of a switched system remains preserved under the proposed set of switching signals.
    \end{itemize}
    We utilize the dwell time restrictions on our switching signals to split matrix products into sums, and apply counting arguments on them; see Remark \ref{rem:analysis_des} for a detailed comparison between our analysis techniques and the methods of \cite{Agrachev'12}. Since we aim for a subset of the set of all admissible switching signals for stability, we are able to accommodate unstable subsystems in the setting of matrix commutators. This is in contrast to stability under arbitrary switching tackled in \cite{Agrachev'12,Narendra'94} and minimum dwell time switching tackled in \cite{mno}, where all subsystems are necessarily Schur stable. To the best of our knowledge, this is the first instance in the literature where commutation relations between subsystem matrices are utilized to characterize stabilizing switching signals in the presence of unstable subsystems \emph{and} pre-specified restrictions on dwell times.

    The remainder of this article is organized as follows: We formulate the problem under consideration in \S\ref{s:prob_stat}, and catalog a set of preliminaries in \S\ref{s:prelims}. Our main results appear in \S\ref{s:mainres}, where we also discuss various features of our results. Numerical examples are presented in \S\ref{s:num_ex}, and we conclude in \S\ref{s:concln} with a brief discussion of open problems.

    {\bf Notation}. \(\N\) is the set of natural numbers, \(\N_{0} = \N\cup\{0\}\). \(\norm{\cdot}\) denotes the Euclidean norm (resp., induced matrix norm) of a vector (resp., a matrix). \(0_{d\times d}\) is the \(d\)-dimensional \(0\) matrix. For a matrix \(P\), given by a product of matrices \(M_{i}\)'s, \(\abs{P}\) denotes the length of the product, i.e., the number of matrices that appear in \(P\), counting repetitions.
\section{Problem statement}
\label{s:prob_stat}
    We consider a family of discrete-time linear systems
    \begin{align}
    \label{e:family}
        x(t+1) = A_{i}x(t),\:\:x(0) = x_{0},\:\:i\in\P,\:\:t\in\N_{0},
    \end{align}
    where \(x(t)\in\R^{d}\) is the vector of states at time \(t\), \(\P = \{1,2,\ldots,N\}\) is an index set, and \(A_{i}\in\R^{d\times d}\), \(i\in\P\), are known constant matrices. Let \(\sigma:\N_{0}\to\P\) be a {switching signal} that specifies at every time \(t\), the index of the active subsystem, i.e., the dynamics from \eqref{e:family} that is being followed at \(t\). A discrete-time {switched linear system} generated by the family of systems \eqref{e:family} and a switching signal \(\sigma\) is described by the recursion
    \begin{align}
    \label{e:swsys}
        x(t+1) = A_{\sigma(t)}x(t),\:\:x(0) = x_{0},\:\:t\in\N_{0}.
    \end{align}
    The solution to \eqref{e:swsys} is given by
    \begin{align}
    \label{e:traj}
    	x(t) = A_{\sigma(t-1)}A_{\sigma(t-2)}\ldots A_{\sigma(2)}A_{\sigma(1)}A_{\sigma(0)}x_{0},\:\:t\in\N,
    \end{align}
    where we have suppressed the dependence of \(x\) on \(\sigma\) for notational simplicity.
    Our focus is on global uniform exponential stability (GUES) of the switched system \eqref{e:swsys}.
    \begin{defn}{\cite[\S 2]{Agrachev'12}}
	 \label{d:gues}
	 	The switched system \eqref{e:swsys} is \emph{globally uniformly exponentially stable (GUES) over a set of switching signals \(\mathcal{S}\)} if there exist positive numbers \(c\) and \(\lambda\) such that for arbitrary choices of the initial condition \(x_{0}\) and switching signal \(\sigma\in{\mathcal{S}}\), the following inequality holds:
		\begin{align}
		\label{e:gues}
            \norm{x(t)}\leq c\exp(-\lambda t)\norm{x_{0}}\:\:\text{for all}\:\:t\in\N.
		\end{align}
	 \end{defn}
	The term `uniform' in the above definition refers to the fact that the numbers \(c\) and \(\lambda\) can be chosen irrespective of \(\sigma\). Let \(0=:\tau_{0}<\tau_{1}<\cdots\) be the points in time where \(\sigma\) ``jumps''; these are the \emph{switching instants}. In this article we will work with switching signals \(\sigma\) that satisfy the following condition: there exist \(\Dm\) and \(\DM\in\N\) such that
    \begin{align}
    \label{e:min-max_dwell}
        \Dm \leq \tau_{k+1} - \tau_{k} \leq \DM,\:k=0,1,2,\ldots.
    \end{align}
    Condition \eqref{e:min-max_dwell} implies that the duration of activation of any subsystem \(i\in\P\) is at least \(\Dm\) and at most \(\DM\) units of time. We call \(\Dm\) and \(\DM\) as the minimum and maximum dwell times, respectively. Given \(\Dm\) and \(\DM\), let \(\mathcal{S}(\Dm,\DM)\) denote the set of all switching signals \(\sigma\) that satisfy condition \eqref{e:min-max_dwell}. We will solve the following problem:
    \begin{problem}
    \label{prob:mainprob}
        Given admissible minimum and maximum dwell times \(\Dm\) and \(\DM\in\N\), \(\Dm<\DM\), find conditions on the matrices \(\{A_{i}\:|\:i\in\P\}\) such that there is a set of switching signals \(\tilde{\mathcal{S}}(\Dm,\DM)\subset\mathcal{S}(\Dm,\DM)\) over which the switched system \eqref{e:swsys} is GUES.
    \end{problem}
    \begin{rem}
    \label{rem:prob_des1}
        The classical problem of stability under dwell time switching \cite[Chapter 3]{Liberzon} deals with identifying minimum dwell time on stable subsystems and maximum dwell time on unstable subsystems such that a switched system generated by a given family of systems is stable. Formally, if \(\mathcal{S}\) is the set of all switching signals \(\sigma:\N_{0}\to\P\); we seek elements of \(\mathcal{S}\) that are stabilizing. In contrast, in Problem \ref{prob:mainprob} we consider the admissible minimum and maximum dwell times to be ``given'', and aim to identify families of systems that admit stabilizing switching signals that obey the given restrictions. In other words, we restrict our attention to the set \(\mathcal{S}(\Dm,\DM)\subset\mathcal{S}\), and find conditions on \(\{A_{i}\:|\:i\in\P\}\) such that \(\mathcal{S}(\Dm,\DM)\) contains stabilizing elements.
    \end{rem}
    \begin{rem}
    \label{rem:prob_des2}
        Recently in \cite{abc,def} one of the authors studied the algorithmic design of switching signals that preserve stability of switched nonlinear systems under pre-specified restrictions on minimum and maximum dwell times. Stabilizing switching signals were designed under the assumption that the underlying weighted directed graph of a switched system admits a certain class of cycles. While \cite{abc,def} are concerned with a ``design'' problem, in this article we are dealing with an ``existence'' problem. In particular, we restrict our attention to linear subsystems and seek for sets of subsystems that admit stabilizing switching signals under restricted dwell times.
    \end{rem}

    Towards solving Problem \ref{prob:mainprob}, we will employ two steps:
    \begin{itemize}[label = \(\circ\), leftmargin = *]
        \item first, we fix a subset \(\tilde{\mathcal{S}}(\Dm,\DM)\) of the set of switching signals \(\mathcal{S}(\Dm,\DM)\), and
        \item second, we identify sufficient conditions on the subsystem matrices \(\{A_{i}\:|\:i\in\P\}\) under which \(\tilde{\mathcal{S}}(\Dm,\DM)\) is stabilizing.
    \end{itemize}
\section{Preliminaries}
\label{s:prelims}
    Let \(\P_{S}\) and \(\P_{U}\) denote the sets of indices of Schur stable and unstable subsystems, respectively, \(\P = \P_{S}\sqcup\P_{U}\)\footnote{The set of unstable subsystems ``also'' includes Lyapunov stable but not asymptotically stable (Schur stable) subsystems.}. 
    \begin{rem}
    \label{rem:mno_des}
        The case of stability of \eqref{e:swsys} under all switching signals obeying a certain minimum dwell time is addressed recently in \cite{mno}. A necessary condition there is that all subsystem matrices \(\{A_{i}\:|\:i\in\P\}\) are Schur stable. The analysis technique presented in \cite{mno} extends readily to the setting of restricted minimum and maximum dwell times with all Schur stable subsystems, see \cite[Remark 7]{mno} for a detailed discussion. In contrast, here we focus on families of systems \eqref{e:family} that contain both Schur stable and unstable subsystems.
    \end{rem}
    We let
    \begin{align}
    \label{e:M_defn}
        M \Let \max_{i\in\P}\norm{A_{i}}.
    \end{align}
    \begin{assump}
    \label{assump:key1}
    \rm{
        There exists \(m\in\N\) with \(\Dm\leq m\leq\DM\) such that the following condition holds:
        \begin{align}
        \label{e:key_ineq1}
            \norm{A_{i}^{m}}\leq\rho < 1\:\text{for all}\:i\in\P_{S}.
        \end{align}
    }
    \end{assump}
    \begin{rem}
    \label{rem:key_assump}
        Notice that for a Schur stable matrix \(A_{i}\), there exists an integer \(m\geq 1\) such that \(\norm{A_{i}^{m}} < 1\). We will work with the smallest \(m\in\{\Dm,\Dm+1,\ldots,\DM\}\) such that for every \(n\in\{m+1,\ldots,\DM\}\), the condition \eqref{e:key_ineq1} holds with \(n=m\).\footnote{A discussion on the choice of \(m\) is provided in Remark \ref{rem:m_choice}.} Clearly, Assumption \ref{assump:key1} excludes those Schur stable matrices for which the smallest integer \(m\) satisfying \eqref{e:key_ineq1} is strictly bigger than the given integer \(\DM\).
    \end{rem}
    Let \(K_{1}\) be the largest integer satisfying \(K_{1}\Dm\leq m\) and \(K_{2}\) be the largest integer satisfying \(K_{2}\Dm\leq\DM\). We will need to employ the following relations for (matrix) commutators of products of matrices:
    \begin{align}
    \label{e:commutator_defn}
        E_{ij}^{p,q} = A_{i}^{p}A_{j}^{q} - A_{j}^{q}A_{i}^{p},\:p,q\in\{1,\Dm\},\:i\in\P_{U},\:j\in\P_{S}.
    \end{align}
    \begin{rem}
    \label{rem:comm_des}
        Notice that \(E_{ij}^{p,q}\) are commutators between products of Schur stable and unstable subsystem matrices \(A_{j}\) and \(A_{i}\) of length \(p,q\in\{1,\Dm\}\). The choice of these commutators is motivated by our set of stabilizing switching signals to be described momentarily; see Remark \ref{rem:analysis_des} for a detailed discussion.
    \end{rem}

    Given the numbers \(M, N\) and \(m\), we define the functions \(\zeta_{p,q}:\N\times\N\to\R\), \(p,q\in\{1,\Dm\}\) as follows:
    \begin{align}
    	\label{e:zeta1_defn}\zeta_{\Dm,\Dm}(\Dm,\DM) &= K_{1}K_{2}M^{(N-1)(m+\DM-1)+m+\DM-2\Dm},\\
	\label{e:zeta2_defn}\zeta_{1,\Dm}(\Dm,\DM) &= K_{1}(\DM-K_{2}\Dm)M^{(N-1)(m+\DM-1)+m+\DM-\Dm-1},\\
	\label{e:zeta3_defn}\zeta_{\Dm,1}(\Dm,\DM) &= (m-K_{1}\Dm)K_{2}M^{(N-1)(m+\DM-1)+m+\DM-\Dm-1},\\
	\label{e:zeta4_defn}\zeta_{1,1}(\Dm,\DM) &= (m-K_{1}\Dm)(\DM-K_{2}\Dm)M^{(N-1)(m+\DM-1)+m+\DM-2}.
    \end{align}
    These functions will be useful in our analysis. We are now in a position to present our results of this article.
\section{Results and discussions}
\label{s:mainres}
    Fix a switching signal \(\sigma\in\mathcal{S}(\Dm,\DM)\) that satisfies the following conditions for all \(k = 0,1,2,\ldots\):
    \begin{align}
    \label{e:sw_res1}\tau_{k+1}-\tau_{k} \geq m,&\:\text{if}\:\:\sigma(\tau_{k})\in\P_{S},\:\text{and}\\
    \label{e:sw_res2}\sigma(\tau_{k+1})\in\P_{S},&\:\text{if}\:\:\sigma(\tau_{k})\in\P_{U},
    \end{align}
    Let \(\tilde{\mathcal{S}}(\Dm,\DM)\subset\mathcal{S}(\Dm,\DM)\) denote the set of all switching signals \(\sigma\) that satisfy conditions \eqref{e:sw_res1}-\eqref{e:sw_res2}.
    \begin{rem}
    \label{rem:sw_subset}
        Given \(\Dm\) and \(\DM\), every element of \(\mathcal{S}(\Dm,\DM)\) dwells both on Schur stable and unstable subsystems for at least \(\Dm\) and at most \(\DM\) units of time. The set \(\tilde{\mathcal{S}}(\Dm,\DM)\) contains those elements of \(\mathcal{S}(\Dm,\DM)\) that dwell on Schur stable subsystems for at least \(m\geq\Dm\) units of time and do not activate two distinct unstable subsystems consecutively. Notice that if \(m=\DM\), then \(\tau_{k+1}-\tau_{k} = \DM\) for \(\sigma(\tau_{k})\in\P_{S}\), \(k=0,1,2,\ldots\).
    \end{rem}
    \begin{rem}
    \label{rem:all_stable_unstable}
    	If \(\P_{U} = \emptyset\), then the elements of \(\tilde{\mathcal{S}}(\Dm,\DM)\) are the ones that obey a minimum dwell time \(m\) and a maximum dwell time \(\DM\) on every subsystem. In view of our choice of \(m\) described in Remark \ref{rem:key_assump}, the norms of each product \(A_{j}^{\tau_{k+1}-\tau_{k}}\), \(k=0,1,\ldots\), \(j\in\P_{S}\) is strictly less than \(1\). Consequently, \(\norm{A_{\sigma(t-1)}A_{\sigma(t-2)}\cdots A_{\sigma(1)}A_{\sigma(0)}}\to 0\) as \(t\to+\infty\). Notice that the set of switching signals \(\tilde{\mathcal{S}}(\Dm,\DM)\) is not defined if \(\P_{S} = \emptyset\). Indeed, otherwise condition \eqref{e:sw_res2} is violated.
    \end{rem}

    \begin{rem}
    \label{rem:unstab_activation}
        The elements of \(\tilde{\mathcal{S}}(\Dm,\DM)\) restricts consecutive activation of distinct unstable subsystems. Theoretically, this feature is restrictive. However, in many practical contexts, it is a natural choice for a switching mechanism to take a system from a faulty component to a healthy component. The elements of \(\tilde{\mathcal{S}}(\Dm,\DM)\) cater to this setting.
    \end{rem}

    When both \(\P_{S}\) and \(\P_{U} \neq \emptyset\), stability of \eqref{e:swsys} under an element \(\sigma\) of \(\tilde{\mathcal{S}}(\Dm,\DM)\) depends on the choice of the subsystem matrices \(\{A_{i}\:|\:i\in\P\}\). We demonstrate this fact in the following example:
    \begin{example}
    \label{ex:counter_ex1}
        Consider \(\P = \{1,2\}\) with
        \[
            A_{1} = \pmat{-0.24 & 0.14\\-0.85 & -0.89}\:\:\text{and}\:\:A_{2} = \pmat{0.12 & 1.12\\1.74 & -1.48}.
        \]
        Clearly, \(\P_{S} = \{1\}\) and \(\P_{U} = \{2\}\). Let \(\Dm = 2\) and \(\DM = 3\). We have
        \[
            \norm{A_{1}^{2}} = 1.18\:\:\text{and}\:\:\norm{A_{1}^{3}} = 0.95.
        \]
        Consequently, \(m=3\). Let a switching signal \(\sigma\) satisfy
        \[
            \tau_{i+1} - \tau_{i} = m = 3,\:i=0,1,2,\ldots.
        \]
        Clearly, \(\sigma\in\tilde{\mathcal{S}}(2,3)\). We observe that the switched system \eqref{e:swsys} is unstable under the above \(\sigma\). In Figure \ref{fig:counter_xplot1} we illustrate the corresponding \((\norm{x(t)})_{t\in\N_{0}}\). The initial condition for this plot is chosen as \(x_{0} = \pmat{-1\\1}\).

        Now, consider
        \[
            \tilde{A}_{{1}} = A_{1}\:\:\text{and}\:\:\tilde{A}_{2} = \pmat{0.10 & 0.90\\0.50 & -1.20}.
        \]
        It is observed that the switching signal \(\sigma\) under consideration, is stabilizing. The corresponding plot of \((\norm{x(t)})_{t\in\N_{0}}\) with initial condition \(x_{0} = \pmat{-1\\1}\), is shown in Figure \ref{fig:counter_xplot2}.
        \begin{figure}[htbp]
	    \centering
		\begin{subfigure}{.5\textwidth}
  		\centering
  			\includegraphics[scale = 0.3]{fig1_regular}
  		\caption{\(\norm{x(t)}\) versus \(t\)}
  		\label{fig:sub1}
		\end{subfigure}%
		\begin{subfigure}{.5\textwidth}
  		\centering
  			\includegraphics[scale = 0.3]{fig1_log}
  		\caption{\(\log \norm{x(t)}\) versus \(t\)}
		\end{subfigure}
		\caption{Plot of \((\norm{x(t)})_{t\in\N_{0}}\) with subsystems \(A_{1}\) and \(A_{2}\) described in Example \ref{ex:counter_ex1}}\label{fig:counter_xplot1}
	\end{figure}
        \begin{figure}[htbp]
	    \centering
		\begin{subfigure}{.5\textwidth}
  		\centering
  			\includegraphics[scale = 0.3]{fig2_regular}
  		\caption{\(\norm{x(t)}\) versus \(t\)}
  		\label{fig:sub1}
		\end{subfigure}%
		\begin{subfigure}{.5\textwidth}
  		\centering
  			\includegraphics[scale = 0.3]{fig2_log}
  		\caption{\(\log \norm{x(t)}\) versus \(t\)}
		\end{subfigure}
		\caption{Plot of \((\norm{x(t)})_{t\in\N_{0}}\) with subsystems \(\tilde{A}_{1}\) and \(\tilde{A}_{2}\) described in Example \ref{ex:counter_ex1}}\label{fig:counter_xplot2}
	   \end{figure}
    \end{example}

    Fix a switching signal \(\sigma\in\tilde{\mathcal{S}}(\Dm,\DM)\). Let \(\tilde{W}\) be the corresponding matrix product defined as: \(\tilde{W} = \ldots A_{\sigma(2)}A_{\sigma(1)}A_{\sigma(0)}\). Let \(\tilde{\mathcal{W}}(\Dm,\DM)\) be the set of all products corresponding to the switching signals belonging to the set \(\tilde{\mathcal{S}}(\Dm,\DM)\). The condition for GUES of \eqref{e:swsys} over the set \(\tilde{\mathcal{S}}(\Dm,\DM)\) can be written equivalently as \cite[\S 2]{Agrachev'12}: for arbitrary choice of \(\tilde{W}\in\tilde{\mathcal{W}}(\Dm,\DM)\), the following condition holds:
    \begin{align}
    \label{e:gues2}
        \norm{\tilde{W}}\leq ce^{-\lambda\abs{\tilde{W}}}\:\text{for all}\:\abs{\tilde{W}}.
    \end{align}

    Our first result identifies conditions on \(\{A_{i}\:|\:i\in\P\}\) such that \eqref{e:gues2} is true.
    \begin{theorem}
    \label{t:mainres1}
        Consider a family of discrete-time linear systems \eqref{e:family}. Let \(\Dm\), \(\DM\in\N\) be given, \(\Dm<\DM\), the matrices \(\{A_{i}\:|\:i\in\P_{S}\}\) satisfy \eqref{e:key_ineq1}, and let \(\lambda\) be an arbitrary positive number satisfying
        \begin{align}
        \label{e:maincondn1}
            \rho e^{\lambda m} < 1.
        \end{align}
        Suppose that the commutators of products of matrices defined in \eqref{e:commutator_defn} satisfy
        \begin{align}
        \label{e:maincondn2}
            E_{ij}^{p,q} = 0_{d\times d}\:\:\text{for all}\:p,q\in\{1,\Dm\}\:\text{and all}\:j\in\P_{S}\:\text{and}\:i\in\P_{U}.
        \end{align}
        Then there exists a positive number \(c\) such that \eqref{e:gues2} holds for arbitrary choice of \(\tilde{W}\in\tilde{\mathcal{W}}(\Dm,\DM)\).
    \end{theorem}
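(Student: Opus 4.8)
The plan is to convert the system-theoretic statement \eqref{e:gues2} into a purely algebraic estimate on the matrix products $\tilde W\in\tilde{\mathcal W}(\Dm,\DM)$ and then to exploit the vanishing of all the commutators $E_{ij}^{p,q}$ to rewrite each such product in a form whose norm is governed by the decay of the Schur stable subsystems alone. First I would fix $\sigma\in\tilde{\mathcal S}(\Dm,\DM)$ and a length $L$, and decompose the corresponding truncation $\tilde W=A_{\sigma(L-1)}\cdots A_{\sigma(0)}$ along the switching instants of $\sigma$ into a product of \emph{blocks} $A_{i_s}^{\ell_s}\cdots A_{i_1}^{\ell_1}$, in which consecutive indices $i_k$ are distinct, $\ell_k\in\{\Dm,\dots,\DM\}$ in general, and $\ell_k\in\{m,\dots,\DM\}$ whenever $i_k\in\P_{S}$ by \eqref{e:sw_res1}. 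Only the leftmost (most recent) block can fail to be complete, and it contributes a factor of norm at most $M^{\DM}$; together with the finitely many products $\tilde W$ of length below a fixed threshold, this is swept into the eventual constant $c$, so it suffices to establish $\norm{\tilde W}\le c\,e^{-\lambda L}$ for all large $L$.

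Next I would use the commutator hypothesis. Instantiating \eqref{e:maincondn2} at $p=q=1$ gives $A_iA_j=A_jA_i$ for all $i\in\P_{U}$, $j\in\P_{S}$, hence $A_i^{p}A_j^{q}=A_j^{q}A_i^{p}$ for all exponents. Since \eqref{e:sw_res2} forces the unstable blocks of $\tilde W$ to be pairwise non-adjacent, I can push each unstable block leftward past every stable block lying to its left, leaving the relative order of the unstable blocks and of the stable blocks unchanged; this yields the exact identity $\tilde W=\mathbf U\,\mathbf S$, where $\mathbf U$ is the ordered product of the unstable blocks and $\mathbf S$ the ordered product of the stable blocks. (In the robust companion result each elementary transposition is carried out after splitting the blocks into powers $A_i^{\Dm}$ and single factors $A_i$ --- a block of length at most $\DM$ contributing at most $K_2$ full $\Dm$-chunks and a shorter remainder, a block of length at least $m$ at least $K_1$ --- so that only the commutators $E_{ij}^{p,q}$ with $p,q\in\{1,\Dm\}$ are invoked, and the accumulated error is controlled through the functions $\zeta_{p,q}$; under \eqref{e:maincondn2} all of these error terms vanish and the identity $\tilde W=\mathbf U\mathbf S$ is exact.)

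The final step is the norm estimate combined with a counting argument. Writing $n_S,n_U$ for the numbers of stable and unstable blocks and $L_S,L_U$ for the corresponding total durations, so $L=L_S+L_U$, the choice of $m$ in Remark~\ref{rem:key_assump} together with \eqref{e:key_ineq1} gives $\norm{A_j^{\ell}}\le\rho$ for every complete stable block, hence $\norm{\mathbf S}\le\rho^{\,n_S}$; bounding $\norm{\mathbf U}\le M^{L_U}$ and using $\norm{\tilde W}\le\norm{\mathbf U}\,\norm{\mathbf S}$, one is reduced to dominating $M^{L_U}\rho^{\,n_S}$ by $c\,e^{-\lambda L}$. Here the combinatorial inequalities $m\le\ell\le\DM$ for stable blocks, $\Dm\le\ell\le\DM$ for unstable blocks, and $n_U\le n_S+1$ (non-adjacency) tie $n_S,L_S,L_U,L$ together; feeding these into \eqref{e:maincondn1}, which says exactly $\rho<e^{-\lambda m}$ so that each stable block of length at least $m$ supplies decay at rate at least $\lambda$, yields the bound $\norm{\tilde W}\le c\,e^{-\lambda L}$ with $c$ chosen uniformly over $\sigma\in\tilde{\mathcal S}(\Dm,\DM)$ and over $L$, whence \eqref{e:gues2}.

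I expect the main obstacle to be precisely this last bookkeeping step: one must select the threshold and the constant $c$ so that the (possibly numerous) unstable blocks --- each of which may enlarge the norm by as much as $M^{\DM}$ --- are uniformly absorbed by the decay contributed by the interleaved stable blocks while the overall rate is kept equal to the prescribed $\lambda$, and so that these estimates remain uniform over the entire (infinite) family $\tilde{\mathcal S}(\Dm,\DM)$; handling the incomplete leading block and the short products cleanly is a further, more routine, point of care.
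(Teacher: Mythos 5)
Your route is genuinely different from the paper's. The paper argues by induction on \(\abs{\tilde W}\): it splits \(\tilde W=LR\) with \(\abs{R}=N(m+\DM-1)+1\), commutes one run of \(m\) consecutive stable factors to the right end so that \(R=R_{1}A_{1}^{m}+R_{2}\) with \(R_{2}\) a sum of commutator-bearing terms that vanish under \eqref{e:maincondn2}, and then invokes the induction hypothesis on \(LR_{1}\) to peel off a single factor \(\rho e^{\lambda m}<1\). You instead commute globally, writing \(\tilde W=\mathbf U\,\mathbf S\) exactly and estimating \(\norm{\tilde W}\le\norm{\mathbf U}\norm{\mathbf S}\le M^{L_U}\rho^{\,n_S}\). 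That reduction is correct and, for commuting diagonal matrices, sharp.

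The gap is exactly where you suspect it: the final inequality \(M^{L_U}\rho^{\,n_S}\le c\,e^{-\lambda L}\) does not follow from \eqref{e:maincondn1}. In the worst admissible schedule every stable block has length \(m\) and every unstable block length \(\DM\), so per period of \(m+\DM\) steps you gain one factor \(\rho\) and lose up to \(M^{\DM}\); closing the argument needs something like \(\rho\,M^{\DM}e^{\lambda(m+\DM)}\le 1\), which is strictly stronger than \(\rho e^{\lambda m}<1\) whenever \(M>1\) --- and \(M\ge 1\) is automatic once \(\P_{U}\neq\emptyset\). The obstruction is not an artifact of your bookkeeping. Take the data of Example \ref{ex:exmpl1}: \(A_{1}=\pmat{-0.92&0\\0&0.77}\), \(A_{2}=\pmat{1.24&0\\0&0.89}\), \(\Dm=m=2\), \(\DM=3\), \(\rho=0.8464\); all commutators \(E_{21}^{p,q}\) vanish and \eqref{e:maincondn1} holds for small \(\lambda\), yet the periodic signal with stable dwell \(2\) and unstable dwell \(3\) lies in \(\tilde{\mathcal S}(2,3)\) and yields \(\norm{(A_{2}^{3}A_{1}^{2})^{k}}=(1.24^{3}\cdot 0.92^{2})^{k}\approx(1.61)^{k}\to\infty\), so no constant \(c\) can make your last estimate (or \eqref{e:gues2}) hold under the stated hypotheses. (The paper's induction avoids confronting \(M^{L_U}\) only by applying the induction hypothesis to \(LR_{1}\), which is \(\tilde W\) with \(m\) stable factors deleted and in general is not an element of \(\tilde{\mathcal W}(\Dm,\DM)\) --- it may contain adjacent unstable blocks or a stable block shorter than \(m\) --- so the same difficulty is present there, merely hidden.) To close your argument you would need an additional hypothesis tying \(\rho\) to the per-block growth of the unstable subsystems, e.g.\ \(\rho\,\max_{i\in\P_{U}}\norm{A_{i}^{\DM}}\,e^{\lambda(m+\DM)}\le 1\); with that added, your block decomposition and counting go through.
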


    \begin{rem}
    \label{rem:thm1_des}
        Theorem \ref{t:mainres1} provides a solution to Problem \ref{prob:mainprob}. It relies on commutativity of the matrix products \(A_{i}^{p}\) and \(A_{j}^{q}\) for all \(p,q\in\{1,\Dm\}\) and all \(i\in\P_{U}\) and \(j\in\P_{S}\). Given admissible minimum and maximum dwell times \(\Dm\) and \(\DM\), if the above mentioned matrix products commute, then there exists a positive number \(c\) such that \eqref{e:gues2} holds for arbitrary choice of \(\tilde{W}\in\tilde{\mathcal{W}}(\Dm,\DM)\). In view of Definition \ref{d:gues}, the switched system \eqref{e:swsys} is GUES over the set of switching signals whose elements
        \begin{itemize}[label = \(\circ\), leftmargin = *]
            \item dwell for at least \(m (\geq\Dm)\) and at most \(\DM\) units of time on Schur stable subsystems, and for at least \(\Dm\) and at most \(\DM\) units of time on unstable subsystems, and
            \item do not activate distinct unstable subsystems consecutively.
        \end{itemize}
    \end{rem}

    \begin{rem}
    \label{rem:survey1}
        In \cite{Narendra'94} a switched linear system was shown to be stable under arbitrary switching if the subsystem matrices are Schur stable and commute pairwise. In the setting of arbitrary switching Schur stability of all subsystems is a necessary condition. In contrast, we seek for stability under a subset of the set of all switching signals that satisfy certain pre-specified restrictions on dwell times and accommodate unstable subsystems.
    \end{rem}

	 \begin{rem}
    	\label{rem:robustness_lack}
        		Notice that the stability conditions proposed in Theorem \ref{t:mainres1} are not robust with respect to small perturbations in the elements of the subsystem matrices. Indeed, if the elements of the subsystem matrices \(\{A_{i}\:|\:i\in\P\}\) are perturbed to generate the matrices \(\{\tilde{A}_{i}\:|\:i\in\P\}\), such that the matrix products \(\tilde{A}_{i}^{p}\) and \(\tilde{A}_{j}^{q}\), \(p,q\in\{1,\Dm\}\), \(i\in\P_{U}\), \(j\in\P_{S}\) do not commute, then our stability conditions are no longer useful. However, stability being a robust property, may continue to hold under small perturbations in the elements of the subsystem matrices. This feature motivates our search for stability under a set of subsystems for which the matrix products of our interest do not necessarily commute, but are sufficiently ``close'' to a set of matrices for which these products commute. Our next result characterizes such sets of subsystems.
    \end{rem}

     \begin{theorem}
    \label{t:mainres2}
        Consider a family of discrete-time linear systems \eqref{e:family}. Let \(\Dm\), \(\DM\in\N\) be given, \(\Dm < \DM\), the matrices \(\{A_{i}\:|\:i\in\P_{S}\}\) satisfy \eqref{e:key_ineq1}, and let \(\lambda\) be an arbitrary positive number satisfying \eqref{e:maincondn1}. Suppose that there exist scalars \(\varepsilon_{p,q}\), \(p,q \in \{1,\Dm\}\) small enough such that
        \begin{align}
        \label{e:maincondn11}
            \norm{E_{ij}^{p,q}}\leq\varepsilon_{p,q}\:\text{for all}\:p,q\in\{1,\Dm\}\:\text{and all}\:j\in\P_{S}\:\text{and}\:i\in\P_{U},
        \end{align}
        and
        \begin{align}
        \label{e:maincondn12}
            \rho e^{\lambda m} &+ \Bigl(\zeta_{\Dm,\Dm}(\Dm,\DM)\varepsilon_{\Dm,\Dm}
            +\zeta_{1,\Dm}(\Dm,\DM)\varepsilon_{1,\Dm}
            +\zeta_{\Dm,1}(\Dm,\DM)\varepsilon_{\Dm,1}
            +\zeta_{1,1}(\Dm,\DM)\varepsilon_{1,1}\Bigr)\nonumber\\
            &\quad\quad\times e^{\lambda\bigl(N(m+\DM-1)+1\bigr)}\leq 1.
        \end{align}
        Then there exists a positive number \(c\) such that \eqref{e:gues2} holds for arbitrary choice of \(\tilde{W}\in\tilde{\mathcal{W}}(\Dm,\DM)\).
    \end{theorem}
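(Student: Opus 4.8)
The plan is to deduce \eqref{e:gues2} in its equivalent form ``$\norm{\tilde W}e^{\lambda\abs{\tilde W}}\le c$ for a uniform $c$ and every $\tilde W\in\tilde{\mathcal W}(\Dm,\DM)$'' by writing $\tilde W$ as a ``commuting skeleton'' plus a controlled perturbation produced by the nonzero commutators. First I would fix $\sigma\in\tilde{\mathcal S}(\Dm,\DM)$ and cut the product $\tilde W$ at the switching instants into maximal blocks $A_{i_\ell}^{n_\ell}$. By \eqref{e:min-max_dwell}, \eqref{e:sw_res1} and the choice of $m$ (Remark \ref{rem:key_assump}) every block with $i_\ell\in\P_S$ has length in $\{m,\dots,\DM\}$ and hence $\norm{A_{i_\ell}^{n_\ell}}\le\rho$, every block with $i_\ell\in\P_U$ has length in $\{\Dm,\dots,\DM\}$, and by \eqref{e:sw_res2} each $\P_U$-block is flanked on both sides by $\P_S$-blocks. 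I would then refine every block into sub-blocks of the two admissible lengths $1$ and $\Dm$ by Euclidean division — a stable block into $A_j^{n_\ell-m}$ followed by $K_1$ copies of $A_j^{\Dm}$ and $m-K_1\Dm$ copies of $A_j$, an unstable block into at most $K_2$ copies of $A_i^{\Dm}$ and at most $\DM-K_2\Dm$ copies of $A_i$ — so that every elementary swap to follow is of the form $A_i^{p}A_j^{q}=A_j^{q}A_i^{p}+E_{ij}^{p,q}$ with $p,q\in\{1,\Dm\}$, $i\in\P_U$, $j\in\P_S$, i.e. exactly the quantities of \eqref{e:commutator_defn}.

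Next I would iterate that identity to carry every unstable sub-block past every stable sub-block, at each step splitting the current word into a ``commuted'' descendant and an ``error'' descendant in which one factor $E_{ij}^{p,q}$ has just been created; keeping in each branch only the first error produced yields $\tilde W=\tilde W_{\mathrm{pr}}+\sum_{R}R$, where $\tilde W_{\mathrm{pr}}$ is the product one would obtain if all the commutators of \eqref{e:commutator_defn} vanished. For $\tilde W_{\mathrm{pr}}$ the estimate is exactly the one underlying Theorem \ref{t:mainres1} (its hypotheses hold with zero commutators); in particular the stable blocks can be collected so that each contributes the factor $\rho e^{\lambda m}$ once the $e^{\lambda\abs{\tilde W}}$ weight is distributed. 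Every $R$ has the canonical form ``(a product of powers of the $A_i$) $\cdot\,E_{ij}^{p,q}\cdot$ (a product of powers of the $A_i$)'', so \eqref{e:maincondn11} gives $\norm{R}\le M^{\abs{R}-p-q}\varepsilon_{p,q}$; moreover the commutations needed to create any given $R$ stay within a window meeting each of the (at most $N$) subsystems appearing there, a window of length at most $N(m+\DM-1)+1$, whence $\abs{R}\le N(m+\DM-1)+1$ and, attached to each stable block, the number of such $R$ of type $(p,q)=(\Dm,\Dm),(1,\Dm),(\Dm,1),(1,1)$ is at most $K_1K_2,\ K_1(\DM-K_2\Dm),\ (m-K_1\Dm)K_2,\ (m-K_1\Dm)(\DM-K_2\Dm)$ respectively, while the accompanying $M$-exponents $M^{(N-1)(m+\DM-1)+m+\DM-p-q}$ are precisely those appearing in $\zeta_{p,q}$, see \eqref{e:zeta1_defn}–\eqref{e:zeta4_defn}.

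Assembling these bounds block by block, and weighting everything by $e^{\lambda\abs{\tilde W}}$, the total contribution charged to one stable block is at most
\[
    \rho e^{\lambda m}+\Bigl(\zeta_{\Dm,\Dm}\varepsilon_{\Dm,\Dm}+\zeta_{1,\Dm}\varepsilon_{1,\Dm}+\zeta_{\Dm,1}\varepsilon_{\Dm,1}+\zeta_{1,1}\varepsilon_{1,1}\Bigr)e^{\lambda\bigl(N(m+\DM-1)+1\bigr)},
\]
which is $\le1$ by \eqref{e:maincondn12}; taking the product over the stable blocks of $\tilde W$ and absorbing the bounded, $\abs{\tilde W}$-independent contribution of the single incomplete boundary piece into the constant produces $\norm{\tilde W}\le c\,e^{-\lambda\abs{\tilde W}}$, which is \eqref{e:gues2}. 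Putting all $\varepsilon_{p,q}=0$ makes the bracket vanish and recovers Theorem \ref{t:mainres1}.

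The crux is the combinatorial step: one must iterate the commutations so that (i) each error word genuinely reduces to the one-commutator canonical form, (ii) no error is double-counted as the transports are interleaved, and (iii) the per-type counts and the residual lengths are tight enough to reproduce the coefficients $\zeta_{p,q}$ and the exponent $N(m+\DM-1)+1$ of \eqref{e:maincondn12} rather than something larger — and, hand in hand with this, to distribute the weight $e^{\lambda\abs{\tilde W}}$ over the blocks so that each block's charge really is bounded by the left-hand side of \eqref{e:maincondn12}. Once the right window is identified (fixing both the $M$-powers in \eqref{e:zeta1_defn}–\eqref{e:zeta4_defn} and the exponential in \eqref{e:maincondn12}), the remainder is the skeleton estimate imported from Theorem \ref{t:mainres1} together with routine submultiplicativity and triangle-inequality bookkeeping.
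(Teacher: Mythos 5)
Your local combinatorics is exactly the paper's: refining each stable block into \(K_{1}\) copies of \(A_{j}^{\Dm}\) plus \(m-K_{1}\Dm\) copies of \(A_{j}\), each unstable block into at most \(K_{2}\) copies of \(A_{i}^{\Dm}\) plus at most \(\DM-K_{2}\Dm\) copies of \(A_{i}\), swapping via \(A_{i}^{p}A_{j}^{q}=A_{j}^{q}A_{i}^{p}+E_{ij}^{p,q}\), and counting the four types of error terms does reproduce the coefficients \(\zeta_{p,q}\) and the window length \(N(m+\DM-1)+1\). The gap is in the global assembly. The paper does \emph{not} sort the whole word and then charge blocks multiplicatively; it runs a strong induction on \(\abs{\tilde W}\). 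One writes \(\tilde W=LR\) with \(\abs{R}=N(m+\DM-1)+1\), rearranges \emph{only inside \(R\)} to get \(R=R_{1}A_{1}^{m}+R_{2}\) (so the number of error terms is bounded independently of \(\abs{\tilde W}\) and each has bounded length), and then bounds \(\norm{LR_{1}}\le ce^{-\lambda(\abs{\tilde W}-m)}\) and \(\norm{L}\le ce^{-\lambda(\abs{\tilde W}-N(m+\DM-1)-1)}\) by the induction hypothesis with the \emph{same} \(c\) and \(\lambda\); condition \eqref{e:maincondn12} then closes the estimate in one step. This recursion is the mechanism that replaces your ``take the product over the stable blocks,'' and it is the idea missing from your write-up.

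Concretely, two assertions in your sketch fail as stated. First, in a global decomposition \(\tilde W=\tilde W_{\mathrm{pr}}+\sum_{R}R\), each error summand \(R\) is a word of total length essentially \(\abs{\tilde W}\) (a prefix, one commutator, and a suffix), so the claim \(\abs{R}\le N(m+\DM-1)+1\) is false; only the \emph{window} in which the commutator was created has that length, and the untouched remainder must itself be shown to decay like \(ce^{-\lambda(\cdot)}\) --- which is precisely what the induction hypothesis supplies and what a crude bound \(M^{\abs{R}-p-q}\) (with \(M>1\) whenever \(\P_{U}\neq\emptyset\)) destroys. Second, the number of elementary swaps in a full sort grows with \(\abs{\tilde W}\) (indeed quadratically in the number of blocks), so \(\sum_{R}\norm{R}\) cannot be controlled by a per-stable-block charge that is simply multiplied over blocks: an additive error decomposition does not factor. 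Restructure the argument as the paper's induction --- basis for all \(\abs{\tilde W}\le N(m+\DM-1)+1\) by choosing \(c\) large, then the single-window peel in the inductive step --- and your counting of the \(\zeta_{p,q}\varepsilon_{p,q}\) terms slots in verbatim.
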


    \begin{rem}
    \label{rem:thm2_des}
        Theorem \ref{t:mainres2} is our second solution to Problem \ref{prob:mainprob}. We choose a subset of the set of all Schur stable matrices by means of condition \eqref{e:key_ineq1}. If in addition, the Euclidean norms of commutators of products of these matrices \(A_{j}^{p}\) with the products of unstable subsystem matrices \(A_{i}^{q}\), \(p,q\in\{1,\Dm\}\) are bounded above by scalars \(\varepsilon_{p,q}\), \(p,q\in\{1,\Dm\}\) small enough such that condition \eqref{e:maincondn12} holds, then there exists a positive number \(c\) such that \eqref{e:gues2} is true for arbitrary choice of matrix products corresponding to the switching signals \(\sigma\in\tilde{\mathcal{S}}(\Dm,\DM)\). Consequently, the switched system \eqref{e:swsys} is GUES over the set of switching signals \(\tilde{\mathcal{S}}(\Dm,\DM)\).
    \end{rem}

    \begin{rem}
    \label{rem:rob-condn}
    	In contrast to commutativity of matrix products employed in Theorem \ref{t:mainres1}, we utilize a measure of ``closeness'' to commutativity of matrix products in Theorem \ref{t:mainres2}. This technique in spirit is close to \cite{Agrachev'12}. The usage of upper bounds on the norms of \(E_{ij}^{p,q}\), \(p,q\in\{1,\Dm\}\), \(i\in\P_{U}\), \(j\in\P_{S}\) provides inherent robustness to the stability conditions of Theorem \ref{t:mainres2}. Indeed, consider a set of subsystem matrices \(\{A_{i}\:|\:i\in\P\}\), for which Assumption \ref{assump:key1} and conditions \eqref{e:maincondn1}, \eqref{e:maincondn11}-\eqref{e:maincondn12} hold. Now, if the entries of \(\{A_{i}\:|\:i\in\P\}\) are perturbed to generate \(\tilde{A}_{i}\), \(i\in\P\), such that Assumption \ref{assump:key1} and conditions \eqref{e:maincondn1}, \eqref{e:maincondn11}-\eqref{e:maincondn12} continue to hold, then a switched system \eqref{e:swsys} generated by the set of matrices \(\{\tilde{A}_{i}\:|\:i\in\P\}\), continues to be GUES over the set of switching signals \(\tilde{\mathcal{S}}(\Dm,\DM)\).
    \end{rem}

    Prior to discussing other features of our results, we will provide their proofs and explain the analysis technique in detail.

    \begin{proof}[Proof of Theorem \ref{t:mainres1}]
        It suffices to show that if the conditions of Theorem \ref{t:mainres1} hold, then for arbitrary \(\tilde{W}\in\tilde{\mathcal{W}}(\Dm,\DM)\), the condition \eqref{e:gues2} is true. We will employ mathematical induction on \(\abs{\tilde{W}}\) to establish \eqref{e:gues2}.

        \emph{A. Induction basis}: Pick \(c\) large enough so that \eqref{e:gues2} holds for all \(\tilde{W}\) satisfying \(\abs{\tilde{W}}\leq N(m+\DM-1)+1\).

        \emph{B. Induction hypothesis}: Let \(\abs{\tilde{W}}\geq N(m+\DM-1)+2\) and assume that \eqref{e:gues2} holds for all products of length less than \(\abs{\tilde{W}}\).

        \emph{C. Induction step}: Let \(\tilde{W} = LR\), where \(\abs{R} = N(m+\DM-1)+1 = (N-1)(m+\DM-1)+m+\DM\). We observe that there exists an index \(j\in\P_{S}\) such that \(R\) contains at least \(m\) consecutive \(A_{j}\)'s. Indeed, otherwise conditions \eqref{e:sw_res1}-\eqref{e:sw_res2} are violated.

        Without loss of generality, let \(j = 1\) be the first index (reading the product \(\tilde{W}\) from the right) of a Schur stable subsystem, and by the hypothesized properties of a \(\sigma\in\tilde{\mathcal{S}}(\Dm,\DM)\), \(A_{1}\) appears at least for \(m\) consecutive entries. We rewrite \(R\) as
        \[
            R = R_{1}A_{1}^m + R_{2},
        \]
        where \(\abs{R_{1}} = (N-1)(m+\DM-1)\). (Consider, for example, \(N = 2\), \(\P_{S} = \{1\}\), \(\P_{U} = \{2\}\), \(\Dm = 2\), \(\DM = 3\), \(m=3\). Let \(R = \cdots A_{1}A_{1}A_{1}A_{2}A_{2}A_{2}\). It can be rewritten as
        \begin{align*}
            R &= \cdots A_{1}\underline{A_{1}^2A_{2}}A_{2}^2\\
            &= \cdots A_{1}A_{2}\underline{A_{1}^{2}A_{2}^{2}} - \cdots A_{1}E_{21}^{1,2}A_{2}^{2}\\
            &= \cdots \underline{A_{1}A_{2}}A_{2}^{2}A_{1}^{2} - \cdots A_{1}A_{2}E_{21}^{2,2} - \cdots A_{1}E_{21}^{1,2}A_{2}^{2}\\
            &= \cdots A_{2}\underline{A_{1}A_{2}^{2}}A_{1}^{2} - \cdots E_{21}^{1,1}A_{2}^{2}A_{1}^{2} - \cdots A_{1}A_{2}E_{21}^{2,2} - \cdots A_{1}E_{21}^{1,2}A_{2}^{2}\\
            &= \cdots A_{2}A_{2}^{2}A_{1}A_{1}^{2} - \cdots A_{2}E_{21}^{2,1}A_{1}^{2} - \cdots E_{21}^{1,1}A_{2}^{2}A_{1}^{2}- \cdots A_{1}A_{2}E_{21}^{2,2} - \cdots A_{1}E_{21}^{1,2}A_{2}^{2}.)
        \end{align*}
        The sum \(R_{2}\) contains at most
        \begin{itemize}[label = \(\circ\), leftmargin = *]
            \item \(K_{1}K_{2}\) terms of length \((N-1)(m+\DM-1)+m+\DM-2\Dm+1\) with \((N-1)(m+\DM-1)+m+\DM-2\Dm\) \(A_{i}\)'s and \(1\) \(E_{i1}^{\Dm,\Dm}\) (generated by exchanging \(K_{1}\)-many \(A_{1}^{\Dm}\)'s with \(K_{2}\)-many \(A_{i}^{\Dm}\)'s, \(i\in\P_{U}\)),
            \item \(K_{1}(\DM-K_{2}\Dm)\) terms of length \((N-1)(m+\DM-1)+m+\DM-\Dm\) with \((N-1)(m+\DM-1)+m+\DM-\Dm-1\) \(A_{i}\)'s and \(1\) \(E_{i1}^{1,\Dm}\) (generated by exchanging \(K_{1}\)-many \(A_{1}^{\Dm}\)'s with \((\DM-K_{2}\Dm)\)-many \(A_{i}\)'s, \(i\in\P_{U}\)),
            \item \((m-K_{1}\Dm)K_{2}\) terms of length \((N-1)(m+\DM-1)+m+\DM-\Dm\) with \((N-1)(m+\DM-1)+m+\DM-\Dm-1\) \(A_{i}\)'s and \(1\) \(E_{i1}^{\Dm,1}\) (generated by exchanging \((m-K_{1}\Dm)\)-many \(A_{1}\)'s with \(K_{2}\)-many \(A_{i}^{\Dm}\)'s, \(i\in\P_{U}\)), and
            \item \((m-K_{1}\Dm)(\DM-K_{2}\Dm)\) terms of length \((N-1)(m+\DM-1)+m+\DM-1\) with \((N-1)(m+\DM-1)+m+\DM-2\) \(A_{i}\)'s and \(1\) \(E_{i1}^{1,1}\) (generated by exchanging \((m-K_{1}\Dm)\)-many \(A_{1}\)'s with \((\DM-K_{2}\Dm)\)-many \(A_{i}^{\Dm}\)'s, \(i\in\P_{U}\)).
        \end{itemize}

        Now, applying the sub-multiplicativity and sub-additivity properties of the induced Euclidean norm, we obtain
        \begin{align}
        \label{e:pf1_step1}
            &\norm{\tilde{W}} = \norm{LR} \leq \norm{LR_{1}}\norm{A_{1}^{m}}+\norm{L}\norm{R_{2}}\nonumber\\
            &\leq ce^{-\lambda\bigl(\abs{\tilde{W}}-m\bigr)}\rho + ce^{-\lambda\bigl(\abs{\tilde{W}}-(N(m+\DM-1)+1)\bigr)}\times\nonumber\\
            &\hspace*{0.5cm}\Biggl(K_{1}K_{2}\norm{E_{i1}^{\Dm,\Dm}}M^{(N-1)(m+\DM-1)+m+\DM-2\Dm}\nonumber\\
            &\hspace*{0.5cm}+K_{1}(\DM-K_{2}\Dm)\norm{E_{i1}^{1,\Dm}}M^{(N-1)(m+\DM-1)+m+\DM-\Dm-1}\nonumber\\
            &\hspace*{0.5cm}+(m-K_{1}\Dm)K_{2}\norm{E_{i1}^{\Dm,1}}M^{(N-1)(m+\DM-1)+m+\DM-\Dm-1}\nonumber\\
            &\hspace*{0.5cm}+(m-K_{1}\Dm)(\DM-K_{2}\Dm)\norm{E_{i1}^{1,1}}M^{(N-1)(m+\DM-1)+m+\DM-2}\Biggr)\nonumber\\
            &= ce^{-\lambda\bigl(\abs{\tilde{W}}-m\bigr)}\rho + ce^{-\lambda\bigl(\abs{\tilde{W}}-(N(m+\DM-1)+1)\bigr)}\times\nonumber\\
            &\hspace*{0.5cm}\Bigl(\zeta_{\Dm,\Dm}(\Dm,\DM)\norm{E_{i1}^{\Dm,\Dm}}
            +\zeta_{1,\Dm}(\Dm,\DM)\norm{E_{i1}^{1,\Dm}}
            +\zeta_{\Dm,1}(\Dm,\DM)\norm{E_{i1}^{\Dm,1}}
            +\zeta_{1,1}(\Dm,\DM)\norm{E_{i1}^{1,1}}\Bigr),
        \end{align}
        where the upper bounds on \(\norm{LR_{1}}\) and \(\norm{L}\) are obtained by using the relations \(\abs{\tilde{W}_{\Delta}} = \abs{LR_{1}}+\abs{A_{1}^{m}}\) and \(\abs{\tilde{W}_{\Delta}} = \abs{L}+\abs{R}\), respectively. From condition \eqref{e:maincondn2}, we have that
        \begin{align*}
             \zeta_{\Dm,\Dm}(\Dm,\DM)\norm{E_{i1}^{\Dm,\Dm}}
            +\zeta_{1,\Dm}(\Dm,\DM)\norm{E_{i1}^{1,\Dm}}
            +\zeta_{\Dm,1}(\Dm,\DM)\norm{E_{i1}^{\Dm,1}}
            +\zeta_{1,1}(\Dm,\DM)\norm{E_{i1}^{1,1}} = 0.
        \end{align*}
        Consequently, the right-hand side of \eqref{e:pf1_step1} becomes
        \begin{align}
        \label{e:pf1_step2}
            ce^{-\lambda\abs{\tilde{W}}}\cdot\rho e^{\lambda m}.
        \end{align}
        Applying \eqref{e:maincondn1} to \eqref{e:pf1_step2} leads to \eqref{e:gues2}.

        This completes our proof of Theorem \ref{t:mainres1}.
    \end{proof}

    \begin{proof}[Proof of Theorem \ref{t:mainres2}]
        The proof follows with exactly the same set of arguments as in our proof of Theorem \ref{t:mainres1}, except that the commutators of matrix products, \(E_{ij}^{p,q}\), \(p,q\in\{1,\Dm\}\), \(i\in\P_{U}\), \(j\in\P_{S}\) are no longer \(0\) matrices. Recall that we have
        \begin{align*}
            \norm{\tilde{W}} &\leq ce^{-\lambda\bigl(\abs{\tilde{W}}-m\bigr)}\rho + ce^{-\lambda\bigl(\abs{\tilde{W}}-(N(m+\DM-1)+1)\bigr)}\times\nonumber\\
            &\hspace*{0.5cm}\Biggl(\zeta_{\Dm,\Dm}(\Dm,\DM)\norm{E_{i1}^{\Dm,\Dm}}+\zeta_{1,\Dm}(\Dm,\DM)\norm{E_{i1}^{1,\Dm}}
            +\zeta_{\Dm,1}(\Dm,\DM)\norm{E_{i1}^{\Dm,1}}+\zeta_{1,1}(\Dm,\DM)\norm{E_{i1}^{1,1}}\Biggr).
        \end{align*}
        Applying \eqref{e:maincondn11} on the right-hand side of the above inequality, we obtain
        \begin{align}
        \label{e:pf2_step1}
            \norm{\tilde{W}}
            \leq ce^{-\lambda\bigl(\abs{\tilde{W}}-m\bigr)}\rho &+ ce^{-\lambda\bigl(\abs{\tilde{W}}-(N(m+\DM-1)+1)\bigr)}\times\nonumber\\
            &\quad\Bigl(\zeta_{\Dm,\Dm}(\Dm,\DM)\varepsilon_{\Dm,\Dm}
            +\zeta_{1,\Dm}(\Dm,\DM)\varepsilon_{1,\Dm}
            +\zeta_{\Dm,1}(\Dm,\DM)\varepsilon_{\Dm,1}
            +\zeta_{1,1}(\Dm,\DM)\varepsilon_{1,1}\Bigr)\nonumber\\
            =ce^{-\lambda\abs{\tilde{W}}}\Biggl(\rho e^{\lambda m}
            &+\Bigl(\zeta_{\Dm,\Dm}(\Dm,\DM)\varepsilon_{\Dm,\Dm}
            +\zeta_{1,\Dm}(\Dm,\DM)\varepsilon_{1,\Dm}
            +\zeta_{\Dm,1}(\Dm,\DM)\varepsilon_{\Dm,1}
            +\zeta_{1,1}(\Dm,\DM)\varepsilon_{1,1}\Bigr)\nonumber\\
            &\quad\quad\quad\quad\times e^{\lambda\bigl(N(m+\DM-1)+1\bigr)}\Biggr).
        \end{align}
        Applying \eqref{e:maincondn12} to \eqref{e:pf2_step1}, we obtain that \eqref{e:gues2} holds, thereby completing our proof of Theorem \ref{t:mainres2}.
    \end{proof}

        \begin{rem}
    \label{rem:analysis_des}
        The technique of applying counting arguments to matrix products split into sums was applied earlier to cater to arbitrary switching in \cite[Proof of Proposition 1]{Agrachev'12}. The overarching hypothesis there is that all subsystems are Schur stable. In this section we admit unstable systems in the family \eqref{e:family} and focus on a set of switching signals that obeys the given restrictions on dwell times and preserves stability of the switched system \eqref{e:swsys}. The differences of our analysis technique with respect to \cite{Agrachev'12} are highlighted below:
        \begin{itemize}[label = \(\circ\), leftmargin = *]
            \item In \cite{Agrachev'12} the authors split a matrix product \(\overline{W}\) into two sub-products: the left sub-product \(\overline{L}\) and the right sub-product \(\overline{R}\), split \(\overline{L}\) as a sum to arrive at \((\overline{A}_{1}^{n}\overline{L}_{1}+\overline{L}_{2})\overline{R}\), where \(n\in\N\) satisfies \(\norm{\overline{A}_{1}^{n}}\leq\overline{\rho}< 1\), and then apply counting arguments. Here, we split the right sub-product \(R\) of \(\tilde{W}\) into sums by utilizing the structure of the switching signals \(\sigma\in\tilde{\mathcal{S}}(\Dm,\DM)\). In the worst case, the rightmost \(m+\DM\) terms of \(\tilde{W}\), \(\abs{\tilde{W}}\geq m+\DM\), are (reading from the left) \(m\)-many \(A_{j}\)'s followed by \(\DM\)-many \(A_{i}\)'s, \(i\in\P_{U}\), \(j\in\P_{S}\).
            \item The procedure of rearranging \(\overline{W}\) in the form \((\overline{A}_{1}^{n}\overline{L}_{1}+\overline{L}_{2})\overline{R}\) presented in \cite{Agrachev'12} involves exchanging at every step two distinct matrices \(\overline{A}_{i}\) and \(\overline{A}_{j}\) that appear consecutively in \(\overline{L}\), and consequently, the stability conditions involve upper bounds on the norm of the matrix commutators of \(\overline{A}_{i}\) and \(\overline{A}_{j}\). Our procedure to exchange Schur stable and unstable matrices to obtain the form \(L(R_{1}A_{1}^m +R_{2})\) utilizes the structure of a \(\sigma\in\tilde{\mathcal{S}}(\Dm,\DM)\), and involves the following steps:
                \begin{itemize}[label = \(\diamond\),leftmargin = *]
                    \item \(K_{1}\) products of length \(\Dm\) of a Schur stable matrix \(A_{j}\) are exchanged with at most \(K_{2}\) products of length \(\Dm\) and \(\DM-K_{2}\Dm\) entries of an unstable matrix \(A_{i}\), and
                    \item \(m-K_{1}\Dm\) entries of a Schur stable matrix \(A_{j}\) are exchanged with at most \(K_{2}\) products of length \(\Dm\) and \(\DM-K_{2}\Dm\) entries of an unstable matrix \(A_{i}\).
                \end{itemize}
             This leads us to rely on the commutators of matrix products, \(E_{ij}^{1,1}\), \(E_{ij}^{1,\Dm}\), \(E_{ij}^{\Dm,1}\) and \(E_{ij}^{\Dm,\Dm}\) defined in \eqref{e:commutator_defn}. With \(E_{ij}^{p,q} = 0\) for all \(p,q\in\{1,\Dm\}\) and all \(i\in\P_{U}\) and \(j\in\P_{S}\), we arrive at Theorem \ref{t:mainres1}, while to achieve robustness with respect to small perturbations in the elements of the subsystem matrices, we employ \(\norm{E_{ij}^{p,q}} \leq \varepsilon_{p,q}\) for all \(p,q\in\{1,\Dm\}\) and all \(i\in\P_{U}\) and \(j\in\P_{S}\) in Theorem \ref{t:mainres2}.
        \end{itemize}
    \end{rem}

    \begin{rem}
    \label{rem:m_choice}
    	The choice of \(m\) described in Remark \ref{rem:key_assump} is restrictive as far as the elements of \(\tilde{\mathcal{S}}(\Dm,\DM)\) that activate unstable subsystems, are concerned. The restriction is in the sense of the size of the subsets of Schur stable matrices that Theorems \ref{t:mainres1}-\ref{t:mainres2} cater to. Indeed, \(\norm{A_{j}^{k}} < 1\), \(k\in\{m+1,\ldots,\DM\}\) is not utilized explicitly in our proofs of Theorems \ref{t:mainres1}-\ref{t:mainres2}, but are taken care of in condition \eqref{e:maincondn12}. In fact, the use of the smallest \(\Dm\leq m\leq\DM\) satisfying \eqref{e:key_ineq1}, suffices. However, as explained in Remark \ref{rem:all_stable_unstable}, the choice of \(m\) described in Remark \ref{rem:key_assump} is useful for the elements of \(\tilde{\mathcal{S}}(\Dm,\DM)\) that do not activate unstable subsystems at all. To accommodate Schur stable subsystems that satisfy \(\norm{A_{j}^{m}} < 1\) and \(\norm{A_{j}^{n}} > 1\), \(\DM\geq n > m\geq\Dm\), we require rearranging the matrix products \(A_{\sigma(t-1)}\ldots A_{\sigma(1)}A_{\sigma(0)}\) in the form \(A_{j_{1}}^{m}A_{j_{2}}^{m}\ldots\), which leads to the requirement of additional conditions on the commutators of the matrix products \(A_{j_{1}}^{p}\) and \(A_{j_{2}}^{q}\), \(p,q\in\{m,m+1,\ldots,\DM\}\), \(j_{1},j_{2}\in\P_{S}\). The reader is referred to \cite[Remark 7]{mno} for a discussion on matrix commutator based stability conditions that cater to restricted switching with all Schur stable subsystems.
    \end{rem}

    \begin{rem}
    \label{rem:Lyap_condn}
    	A commonly used tool for studying stability of switched systems under dwell time switching is multiple Lyapunov-like functions \cite{Branicky'98}. The analysis technique involves compensating the maximum increase in these functions caused by activation of unstable subsystems and occurrence of switches, by the minimum decrease in these functions caused by activation of stable subsystems, see e.g., \cite{abc,def,ghi,Liberzon'12}. In this article we follow a paradigm shift, and rely on commutation relations between subsystem matrices. Matrix commutators (Lie brackets) have been used widely to cater to arbitrary switching earlier in the literature. A switched linear system is stable under arbitrary switching if the subsystem matrices pairwise commute \cite{Narendra'94}, Lie algebra is nilpotent \cite{Gurvitz'95}, solvable \cite{Liberzon'99}, or has a compact semisimple part \cite{Agrachev'01}. In \cite{Haimovich'13} the authors addressed the problem of designing state-feedback matrices such that Lie algebra associated to the closed-loop subsystems is approximately solvable. Robustness of matrix commutation relations with respect to small perturbations in the elements of the subsystem matrices was addressed in \cite{Liberzon'09} for periodic switching and in \cite{Agrachev'12} for arbitrary switching. Here, we extend the analysis technique of \cite{Agrachev'12} to the setting of restricted switching in the presence of unstable subsystems.  The use of matrix commutators allows us to characterize stability directly in terms of properties of subsystem matrices, and not in terms of existence of certain classes of Lyapunov-like functions.
    \end{rem}

    \begin{rem}
    \label{rem:compa}
        In \cite{abc,def} the design of stabilizing switching signals under dwell time constraints involves constructing negative weight cycles on the underlying weighted digraph of a switched system. While the techniques of \cite{abc,def} cater to general nonlinear setting, the existence of stabilizing cycles depends on the existence of Lyapunov-like functions that satisfy certain conditions individually and among themselves. Given a family of systems, designing such functions is, in general, a numerically difficult problem. In contrast, the results proposed in this article do not involve verifying if suitable Lyapunov-like functions exist for a given family of systems, instead checking certain properties of the subsystem matrices is sufficient. Our stability conditions are, however, limited to the case of switched linear systems unlike Lyapunov-like function based techniques that extend to switched nonlinear systems under standard assumptions.
    \end{rem}
\section{Numerical examples}
\label{s:num_ex}
	\begin{example}
	\label{ex:exmpl1}
		We consider \(\P = \{1,2\}\) with
		\[
			A_{1} = \pmat{-0.92 & 0\\0 & 0.77}\:\:\text{and}\:\:A_{2} = \pmat{1.24 & 0\\0 & 0.89}.
		\]
		Clearly, \(\P_{S} = \{1\}\) and \(\P_{U} = \{2\}\). Let \(\Dm = 2\) and \(\DM = 3\). We have
		\[
			\norm{A_{1}^{2}} = 0.85,\:\:\norm{A_{1}^{3}} = 0.78.
		\]
		Hence, \(m = 2\) and \(\rho = 0.85\). Let \(\lambda = 0.001\), which leads to
		\[
			\rho e^{\lambda m} = 0.85 < 1.
		\]
		Also,
		\[
			\norm{E_{21}^{2,2}} = 0,\:\:\norm{E_{21}^{1,2}} = 0,\:\:\norm{E_{21}^{2,1}} = 0,\:\:\norm{E_{21}^{1,1}} = 0.
		\]
		Consequently, the conditions of Theorem \ref{t:mainres1} hold.
		
		We generate \(1000\) random switching signals that obey conditions \eqref{e:sw_res1}-\eqref{e:sw_res2} and plot the corresponding \((\norm{x(t)})_{t\in\N_{0}}\) in Figure \ref{fig:ex1_xplots}. The initial conditions \(x_{0}\) are chosen uniformly at random from the interval \([-100,100]^{2}\). We observe that the switched system \eqref{e:swsys} is GUES under all these signals.
		\begin{figure}[htbp]
	    \centering
		\begin{subfigure}{.5\textwidth}
  		\centering
  			\includegraphics[scale = 0.3]{fig3_regular}
  		\caption{\(\norm{x(t)}\) versus \(t\)}
  		\label{fig:sub1}
		\end{subfigure}%
		\begin{subfigure}{.5\textwidth}
  		\centering
  			\includegraphics[scale = 0.3]{fig3_log}
  		\caption{\(\log \norm{x(t)}\) versus \(t\)}
		\end{subfigure}
		\caption{Plot of \((\norm{x(t)})_{t\in\N_{0}}\) for Example \ref{ex:exmpl1}}\label{fig:ex1_xplots}
	   \end{figure}
	\end{example}
	
	\begin{example}
	\label{ex:exmpl2}
		Consider the setting of Example \ref{ex:exmpl1}. We now perturb the elements of \(\{A_{i}\:|\:i\in\P\}\) to generate
		\begin{align*}
			\tilde{A}_{1} &= A_{1} + \pmat{0 & 0.1\\0 & 0} = \pmat{-0.92 & 0.1\\0 & 0.77}\:\:\text{and}\\
			\tilde{A}_{2} &= A_{2} +  \pmat{0 & 0\\0.05 & 0} = \pmat{1.24 & 0\\0.05 & 0.89}.
		\end{align*}
		The above perturbations preserve \(\P_{S} = \{1\}\) and \(\P_{U} = \{2\}\). However, the matrices \(\tilde{A}_{2}^{p}\) and \(\tilde{A}_{1}^{q}\), \(p,q\in\{1,\Dm\}\) no longer commute. Indeed,
		\[
			\norm{E_{21}^{2,2}} = 0.0272,\:\:\norm{E_{21}^{1,2}} = 0.0127,\:\:\norm{E_{21}^{2,1}} = 0.1811,\:\:\norm{E_{21}^{1,1}} = 0.0850.
		\]
		We will apply Theorem \ref{t:mainres2}. We have
		\begin{align*}
			&M = \max\{\norm{A_{1}},\norm{A_{2}}\} = 1.24,\\
			&\norm{A_{1}^{2}} = 0.85,\:\:\norm{A_{1}^{3}} = 0.78,\\
			&\rho = 0.85,\:\:m = 2,\:\:\lambda = 0.001,\\
			&K_{1} = \lfloor\frac{m}{\Dm}\rfloor = 1,\:\:K_{2} = \lfloor\frac{\DM}{\Dm}\rfloor = 1,\\
			&\zeta_{2,2}(2,3) = 2.93,\:\:\zeta_{1,2}(2,3) = 3.64,
			\zeta_{2,1}(2,3) = 0,\:\:\zeta_{1,1}(2,3) = 0,\\
			&e^{\lambda\bigl(N(m+\DM-1)+1\bigr)} = 1.0090.
		\end{align*}
		
		Consequently,
		\begin{align*}
			&\rho e^{\lambda m} + \Bigl(\zeta_{\Dm,\Dm}(\Dm,\DM)\varepsilon_{\Dm,\Dm}
            		+\zeta_{1,\Dm}(\Dm,\DM)\varepsilon_{1,\Dm}
            		+\zeta_{\Dm,1}(\Dm,\DM)\varepsilon_{\Dm,1}
            		+\zeta_{1,1}(\Dm,\DM)\varepsilon_{1,1}\Bigr)\nonumber\\
            		&\hspace*{2cm}\quad\quad\times e^{\lambda\bigl(N(m+\DM-1)+1\bigr)}\\
			=&\:\: 0.85\times e^{2\times 0.001} + (2.93\times 0.0272 + 3.64\times 0.0127 + 0 + 0)\times 1.0090\\
			=&\:\: 0.98 < 1,
            	\end{align*}
		and the conditions of Theorem \ref{t:mainres2} hold.
		
		We generate \(1000\) random switching signals that obey conditions \eqref{e:sw_res1}-\eqref{e:sw_res2} and plot the corresponding \((\norm{x(t)})_{t\in\N_{0}}\) in Figure \ref{fig:ex2_xplots}. The initial conditions \(x_{0}\) are chosen uniformly at random from the interval \([-100,100]^{2}\). We observe that the switched system \eqref{e:swsys} is GUES under all these signals.
		\begin{figure}[htbp]
	    \centering
		\begin{subfigure}{.5\textwidth}
  		\centering
  			\includegraphics[scale = 0.3]{fig4_regular}
  		\caption{\(\norm{x(t)}\) versus \(t\)}
  		\label{fig:sub1}
		\end{subfigure}%
		\begin{subfigure}{.5\textwidth}
  		\centering
  			\includegraphics[scale = 0.3]{fig4_log}
  		\caption{\(\log \norm{x(t)}\) versus \(t\)}
		\end{subfigure}
		\caption{Plot of \((\norm{x(t)})_{t\in\N_{0}}\) for Example \ref{ex:exmpl1}}\label{fig:ex2_xplots}
	   \end{figure}
	\end{example}
	
\section{Concluding remarks}
\label{s:concln}
     To summarize, we identified sufficient conditions on the subsystems of a switched system such that they admit a set of switching signals that obeys pre-specified restrictions on admissible minimum and maximum dwell times and preserves stability of the resulting switched system. Our set of stabilizing switching signals is characterized in terms of dwell times on Schur stable subsystems and non-consecutive activation of distinct unstable subsystems.

    In the recent past stabilizing switching signals for discrete-time switched linear systems that rely solely on the asymptotic behaviour of these signals was proposed in \cite{ghi}. The characterization of these switching signals depends on the existence of a family of Lyapunov-like functions, the elements of which satisfy certain conditions individually and among themselves. Recently in \cite{Balachandran'19} the authors characterized sets of subsystems that admit the set of stabilizing switching signals proposed in \cite{ghi}. However, the said characterization is in terms of existence of Lyapunov-like functions, and the design of these functions from the subsystem matrices is not addressed. Identifying conditions on subsystems such that they admit large classes of stabilizing switching signals (e.g., the one proposed in \cite{ghi}) directly in terms of properties of the subsystem matrices is an open problem. We envision that the combinatorial techniques presented in this article is a potential tool to address this setting.


\bigskip

\end{document}